\title{}\date{}
\title{Exact Kink Solitons in a Monopole Confinement Problem}
\author{Shouxin Chen\\Institute of Contemporary Mathematics\\School of Mathematics\\Henan University\\
Kaifeng, Henan 475004, PR China\\ \\Yijun Li\\School of Mathematics\\Henan University\\
Kaifeng, Henan 475004, PR China\\ \\Yisong Yang \\Department of Mathematics\\Polytechnic Institute of New York University\\Brooklyn, New York 11201, USA}
\def\XXint#1#2#3{{\setbox0=\hbox{$#1{#2#3}{\int}$}
 \vcenter{\hbox{$#2#3$}}\kern-.5\wd0}}
\newtheorem{oldtheorem}{Theorem}
\newtheorem{oldassertion}[oldtheorem]{Assertion}
\newtheorem{oldproposition}[oldtheorem]{Proposition}
\newtheorem{oldremark}[oldtheorem]{Remark}
\newtheorem{oldlemma}[oldtheorem]{Lemma}
\newtheorem{olddefinition}[oldtheorem]{Definition}
\newtheorem{oldclaim}[oldtheorem]{Claim}
\newtheorem{oldcorollary}[oldtheorem]{Corollary}
\newenvironment{theorem}{\begin{oldtheorem}$\!\!\!${\bf.}}{\end{oldtheorem}}
\newenvironment{lemma}{\begin{oldlemma}$\!\!\!${\bf.}}{\end{oldlemma}}
\newbox\qedbox
\newenvironment{proof}{\smallskip\noindent{\bf Proof.}\hskip \labelsep}%
                        {\hfill\penalty10000\copy\qedbox\par\medskip}
\newcommand{\dd}{\mbox{d}}
\newcommand{\ee}{\end{equation}}
\newcommand{\be}{\begin{equation}}\newcommand{\bea}{\begin{eqnarray}}
\newcommand{\eea}{\end{eqnarray}}
\newcommand{\e}{\mbox{e}}
\newcommand{\Om}{\Omega}
\newcommand{\nn}{\nonumber}
\newcommand{\lm}{\lambda}
\begin{document}
\maketitle
\begin{abstract}
We explicitly construct all kink solitons arising in the recent study of Auzzi, Bolognesi, and Shifman
of a monopole confinement problem in ${\cal N}=2$ supersymmetric QCD. In particular, we show that all finite-energy
kink solitons must be BPS.
\end{abstract}

\maketitle

Monopole confinement in the context of supersymmetric gauge field theories 
\cite{SW,MY,Auzzi,HT,ShY,Tong,Gr} is an actively pursued subject, exploring the initial proposal by 
Mandelstam \cite{Man1,Man2}, Nambu \cite{Nambu}, and 't Hooft \cite{tH1,tH2}, who attempted to  
gain some conceptual understanding of the quark confinement problem in QCD, known to be an outstanding puzzle in theoretical physics, through a vortexline or string interaction mechanism. In the recent interesting study of Auzzi, Bolognesi, and Shifman \cite{ABS},
kink solitons 
arising in ${\cal N}=2$ supersymmetric theory with the gauge group $U(2)$ and two flavors of quarks
are formulated and described numerically which interpolate several pairs of confined monopole vacua 
through 2-strings
and
are expressed in terms of two monopole moduli space coordinates 
called profile functions
(there are four coaxial 2-string moduli space coordinates but two are irrelevant for the monopole problem).
 The purpose of this note is to obtain all these finite-energy kinks explicitly. First, we prove that
any finite-energy solution of the Euler--Lagrange equations of the kink energy of Auzzi, Bolognesi, and Shifman
\cite{ABS} must be BPS \cite{Bo,PS}. Then we 
present all the BPS solutions explicitly. These exact solutions are shown to depend precisely on two free parameters.

Following \cite{ABS}, we use $\kappa$ and $\alpha$  to denote the two monopole moduli space coordinates which are functions
of a single variable $x\in(-\infty,\infty)$.
Then the kink energy is given by the functional
\be\label{1}
E(\kappa,\alpha)=\int_{-\infty}^\infty  r\left\{4A(\kappa')^2+2(1-\kappa^{2})^{2}(\alpha')^{2} + V(\kappa,\alpha)\right\}\,\dd x,
\ee
where
\be\label{2}
V(\kappa,\alpha)=2m^{2}(1-\kappa^{2})^{2}\sin^{2}2\alpha +\frac{4m^{2}\kappa^{2}(1-\kappa^{2})^{2}\cos^{2}2\alpha}{A}
\ee
is the potential density, $'$ denotes derivative with respect to $x$, and $r,A,m$ are positive parameters.
The associated Euler--Lagrange equations  are seen to be
\bea 
\kappa''&=&-\frac{\kappa(1-\kappa^2)}A\left((\alpha')^2+m^2\sin^22\alpha-\frac{m^2(1-3\kappa^2)\cos^2 2\alpha}A\right),\label{EL1}\\
((1-\kappa^2)^2\alpha')'&=&m^2(1-\kappa^2)^2\left(1-\frac{2\kappa^2}A\right)\sin4\alpha.\label{EL2}
\eea

Kinks are finite-energy solutions of these equations satisfying the boundary condition \cite{ABS}
\be \label{bc}
\kappa(-\infty)=\kappa(\infty)=0;\quad\alpha(-\infty)=0,\quad\alpha(\infty)=\frac\pi2,
\ee
which are difficult to obtain directly. Fortunately, Auzzi, Bolognesi, and Shifman \cite{ABS} find that one may follow the
BPS trick \cite{Bo,PS} to rewrite the energy functional (\ref{1}) into the form
\bea 
E(\kappa,\alpha)&=&\int_{-\infty}^\infty r\left\{4A\left(\kappa'-\frac{m}A\kappa(1-\kappa^2)\cos2\alpha\right)^2
+2(1-\kappa^2)^2(\alpha'-m\sin2\alpha)^2\right\}\,\dd x\nn\\
&&-2mr \int_{-\infty}^\infty((1-\kappa^2)^2\cos2\alpha)'\,\dd x,
\eea
which in view of the boundary condition (\ref{bc}) gives rise to the energy lower bound
\be 
E(\kappa,\alpha)\geq 4mr.
\ee
Such a lower bound is attained when $(\kappa,\alpha)$ satisfies the BPS equations 
\bea 
\kappa'&=&\frac{m}A\kappa(1-\kappa^2)\cos2\alpha,\label{BPS1}\\
\alpha'&=& m\sin2\alpha.\label{BPS2}
\eea

It is straightforward to check that (\ref{BPS1})--(\ref{BPS2}) imply (\ref{EL1})--(\ref{EL2}). It will take some effort,
however, to show that the converse is also true. In other words, we shall prove that any solution of (\ref{EL1})--(\ref{EL2})
subject to the boundary condition (\ref{bc}) and finite-energy condition
\be \label{Ef}
E(\kappa,\alpha)<\infty,
\ee
will also satisfy the BPS equations (\ref{BPS1})--(\ref{BPS2}). Hence all kinks are necessarily BPS.

We split our proof into several steps in the form of lemmas. In doing so, we assume $(\kappa,\alpha)$ is a finite-energy solution of (\ref{EL1})--(\ref{EL2})
under the boundary condition (\ref{bc}).

\begin{lemma} \label{lemma1} For any point $a\in(-\infty,\infty)$ satisfying 
\be \label{bd}
|\alpha(x)|<\frac\pi8,\quad \kappa^2(x)<\max\left\{1,\frac A2\right\},\quad x<a,
\ee
we have
\be 
\alpha(x)
\neq 0,\quad -\infty<x<a.
\ee
\end{lemma}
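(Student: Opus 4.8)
The plan is to argue by contradiction: assume $\alpha$ vanishes somewhere on $(-\infty,a)$ and rule this out by a maximum-principle analysis of the second Euler--Lagrange equation (\ref{EL2}). With $q:=(1-\kappa^2)^2\ge0$, that equation reads $(q\alpha')'=c\,\alpha$, where $c:=m^2q\,(1-2\kappa^2/A)\,(\sin 4\alpha)/\alpha$; since $|\alpha|<\pi/8$ forces $|4\alpha|<\pi/2$, the factor $(\sin 4\alpha)/\alpha$ is strictly positive on $(-\infty,a)$, so the sign of $c$ is that of $1-2\kappa^2/A$ times the nonnegative weight $q$. Because $\kappa(-\infty)=0$ by (\ref{bc}), the inequality $\kappa^2<A/2$ holds on some maximal left ray $(-\infty,b)$ with $b\in(-\infty,a]$, and there $c>0$ and $q>0$; moreover, when $A\ge2$ the bound (\ref{bd}) already gives $\kappa^2<A/2$ on all of $(-\infty,a)$, so $b=a$ (apart from the degenerate locus $\kappa^2=1$, which is easily excluded and can only occur for $A>2$).

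The heart of the argument is the maximum principle on $(-\infty,b)$. If $\alpha$ has a zero $x_1<a$ with $\alpha'(x_1)=0$ --- in particular if $\alpha\equiv0$ on some left ray --- then $\alpha(x_1)=\alpha'(x_1)=0$, and uniqueness for the first-order system equivalent to (\ref{EL1})--(\ref{EL2}) on the nondegeneracy set $\{\kappa^2\ne1\}$ forces $\alpha\equiv0$ near $x_1$, hence by continuation on all of $(-\infty,\infty)$, contradicting $\alpha(\infty)=\pi/2$. Otherwise every zero of $\alpha$ is simple; suppose one of them, $x_1$, lies in $(-\infty,b)$. Since (\ref{EL1})--(\ref{EL2}) and (\ref{bd}) are invariant under $\alpha\mapsto-\alpha$, we may assume $\alpha$ takes a negative value on $(-\infty,x_1)$; as $\alpha\to0$ at $-\infty$ and $\alpha(x_1)=0$, the minimum of $\alpha$ over $(-\infty,x_1]$ is negative and attained at an interior point $x_2$, where $\alpha'(x_2)=0$ and $\alpha''(x_2)\ge0$. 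Then (\ref{EL2}) at $x_2$ gives $q(x_2)\alpha''(x_2)=c(x_2)\alpha(x_2)$: a nonnegative quantity equal to a strictly negative one. This contradiction shows $\alpha\ne0$ on $(-\infty,b)$, which already settles the lemma when $A\ge2$.

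The remaining task --- extending $\alpha\ne0$ from $(-\infty,b)$ to $(-\infty,a)$ when $A<2$ and $b<a$ --- is \emph{the main obstacle}. On $[b,a)$ the hypothesis (\ref{bd}) gives only $\kappa^2<1$, so $1-2\kappa^2/A$ may be negative over $\{\kappa^2\ge A/2\}$ and the plain maximum principle can fail at an extremum of $\alpha$ sitting above that set. To close this case I would couple (\ref{EL2}) with the first equation (\ref{EL1}): on the part of the ray where $\alpha$ keeps a fixed sign --- so that $\sin^2 2\alpha$ is small and $\cos^2 2\alpha$ bounded below --- equation (\ref{EL1}) together with $\kappa(-\infty)=0$ should confine $\kappa$ tightly enough either to keep $\kappa^2<A/2$ at the relevant critical points of $\alpha$, or, failing a pointwise bound, to run an integrated version of the argument --- multiply (\ref{EL2}) by the negative part of $\alpha$, integrate over $(-\infty,x_1)$, and use the finite-energy condition (\ref{Ef}) to discard the boundary terms --- in which the contribution of $\{\kappa^2\ge A/2\}$ is absorbed. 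I expect this interplay between the two Euler--Lagrange equations, together with the minor bookkeeping for the degenerate locus $\kappa^2=1$, to be the genuinely delicate point; the $A\ge2$ maximum principle itself is routine.
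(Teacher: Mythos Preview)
Your approach is the paper's: both arguments hinge on the sign of the right-hand side of (\ref{EL2}), using that $|\alpha|<\pi/8$ forces $\sin 4\alpha$ to share the sign of $\alpha$ and that $\kappa^2<A/2$ forces $1-2\kappa^2/A>0$. The paper integrates (\ref{EL2}) from an extremum $b$ of $\alpha$ (where $\alpha'(b)=0$, $\alpha(b)\neq0$, the latter by the same uniqueness argument you give) to the next zero $x_0$, and compares the sign of $(1-\kappa^2(x_0))^2\alpha'(x_0)$ with that of the integral; you instead evaluate (\ref{EL2}) pointwise at an interior extremum. The two versions are equivalent.

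You are right that the hypothesis as printed is too weak when $A<2$: then $\max\{1,A/2\}=1$, so (\ref{bd}) gives only $\kappa^2<1$, and $1-2\kappa^2/A$ may be negative, breaking the sign argument. The paper's own proof does not address this either --- it simply asserts that the integral has the sign of $\alpha(b)$, which requires $\kappa^2<A/2$ on all of $(b,x_0)$. The natural reading is that $\max$ is a misprint for $\min$: with $\kappa^2<\min\{1,A/2\}$ one has simultaneously $(1-\kappa^2)^2>0$ and $1-2\kappa^2/A>0$, and the argument (yours or the paper's) goes through for every $A>0$ with no case split and no degenerate locus to worry about. Since the lemma is invoked only in Lemma~\ref{lemma2}, on a left ray where $\kappa(-\infty)=\alpha(-\infty)=0$ makes both bounds available, the slip is harmless for the paper. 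Your ``main obstacle'' for $A<2$ is therefore an artifact of the stated bound, not a genuine difficulty; no coupling with (\ref{EL1}) or integrated energy estimate is needed.
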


\begin{proof} 
Suppose there is some $x_0<a$ such that $\alpha(x_0)=0$. Since $\alpha(-\infty)=0$,  there is a point $b<x_0$ such that $b$ is either a
local maximum or a local minimum point of $\alpha$. In either case, $\alpha'(b)=0$ but $\alpha(b)\neq0$
otherwise the uniqueness theorem for the initial value problem of ordinary differential
equations applied to (\ref{EL2}) implies $\alpha\equiv0$ which is false.
Of course, we may assume that $x_0$ is the first zero of $\alpha$ above $b$ and
$\alpha'(x_0)\neq0$.  Hence we have the alternatives
\be \label{13}
\alpha'(x_0)>0\quad\mbox{if } \alpha(b)<0;\quad \alpha'(x_0)<0\quad \mbox{if }\alpha(b)>0,
\ee
and $\alpha(x)$ does not change sign for $x\in(b,x_0)$. Integrating (\ref{EL2}) over $(b,x_0)$ and using 
(\ref{bd}), we have
\be 
(1-\kappa^2(x_0))\alpha'(x_0)=\int_{b}^{x_0}m^2(1-\kappa^2)^2\left(1-\frac{2\kappa^2}A\right)\sin4\alpha\, \dd x,
\ee
whose sign is the same as $\alpha(b)$, which contradicts (\ref{13}). Therefore we have shown that $\alpha(x)$ does not vanish for $x<a$.
\end{proof}

Since $\kappa(-\infty)=0$, the finite-energy condition already indicates 
\be\label{lm}
\liminf_{x\to-\infty}|\alpha'(x)|=0.
\ee
For our purpose, however, we need to strength this result into 

\begin{lemma}\label{lemma2} We have the full limit
\be 
\lim_{x\to-\infty} \alpha'(x)=0.
\ee
\end{lemma}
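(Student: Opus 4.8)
The plan is to upgrade the $\liminf$ statement \eqref{lm} to a genuine limit by combining three facts: (i) the finite-energy condition forces $\alpha$ to have a limit $\ell\in\{0,\pi/2\}$ (mod the relevant symmetry) as $x\to-\infty$ — here we will invoke the boundary data $\alpha(-\infty)=0$ from \eqref{bc}, so in fact $\alpha(x)\to0$; (ii) the finite-energy condition also pins down the behavior of $\kappa$, namely $\kappa(x)\to0$, so that the coefficient $(1-\kappa^2)^2$ in \eqref{EL2} stays bounded away from $0$ near $-\infty$; (iii) the second-order equation \eqref{EL2}, rewritten in the divergence form $((1-\kappa^2)^2\alpha')'=\text{(bounded)}\cdot\sin4\alpha$, can then be integrated to control $\alpha'$ pointwise.

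First I would make precise what finite energy gives near $-\infty$. From $\int_{-\infty}^\infty r\{4A(\kappa')^2+2(1-\kappa^2)^2(\alpha')^2+V(\kappa,\alpha)\}\,\dd x<\infty$ and $V\ge0$, each of the three nonnegative integrands is in $L^1$ near $-\infty$; in particular $\kappa'\in L^2$ and $(1-\kappa^2)\alpha'\in L^2$ near $-\infty$. Together with $\kappa(-\infty)=0$ this should give that $\kappa$ stays small on a half-line $(-\infty,a)$, so $(1-\kappa^2)^2$ is bounded between two positive constants there; hence $\alpha'\in L^2$ near $-\infty$ as well. Next, the $V$-term: the piece $2m^2(1-\kappa^2)^2\sin^22\alpha\in L^1$ near $-\infty$ forces $\liminf_{x\to-\infty}|\sin2\alpha(x)|=0$, and combined with $\alpha'\in L^2$ (which prevents $\alpha$ from oscillating too wildly) and the boundary value $\alpha(-\infty)=0$ from \eqref{bc}, one concludes $\alpha(x)\to0$, hence also $|\sin4\alpha(x)|\to0$, as $x\to-\infty$.

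Now set $w(x)=(1-\kappa^2(x))^2\alpha'(x)$. Equation \eqref{EL2} reads $w'(x)=m^2(1-\kappa^2)^2(1-2\kappa^2/A)\sin4\alpha(x)$. Since $\kappa$ is bounded and $\sin4\alpha(x)\to0$ as $x\to-\infty$, we get $w'(x)\to0$; but more is true — I would show $w'\in L^1$ near $-\infty$, because $|w'|\le C|\sin4\alpha|\le C'|\alpha|$ and $\alpha\in L^2$ with $\alpha'\in L^2$ near $-\infty$ forces $\alpha\to0$ fast enough (alternatively, bound $|w'|\le C|\sin2\alpha|$ directly, and use $\sin2\alpha\in L^2$ from the $V$-term together with $|\cos2\alpha|$ bounded, plus that $(\sin2\alpha)'=2\alpha'\cos2\alpha\in L^2$, to conclude $\sin2\alpha\to0$ and is in fact integrable after squaring — the cleanest route is: $(\sin^22\alpha)\in L^1$ and its derivative $4\alpha'\sin2\alpha\cos2\alpha\in L^1$ by Cauchy–Schwarz, so $\sin^22\alpha\to0$; iterating gives enough decay). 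Once $w'\in L^1$ near $-\infty$, $w(x)$ has a finite limit $w(-\infty)$; but $\alpha'\in L^2$ near $-\infty$ forces $\liminf|\alpha'|=0$, hence $\liminf|w|=0$, so $w(-\infty)=0$, i.e. $w(x)\to0$. Since $(1-\kappa^2)^2\to1$, this yields $\alpha'(x)\to0$ as claimed.

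The main obstacle I anticipate is the rigorous passage from the $L^1$/$L^2$ integrability of $\sin^22\alpha$ (and of $(\kappa')^2$, $(\alpha')^2$) to the pointwise statements $\kappa(x)\to0$ and $\alpha(x)\to0$ near $-\infty$: one has to rule out the possibility that $\alpha$ (or $\kappa$) drifts slowly or spikes on thin intervals while keeping the integrals finite. The standard device — if $f\in L^2$ and $f'\in L^2$ on a half-line then $f\to0$ — handles $\alpha$ and $\kappa$ individually, but applying it to $\sin2\alpha$ requires knowing $(\sin2\alpha)'=2\alpha'\cos2\alpha$ is also square-integrable, which follows since $\alpha'\in L^2$ and $\cos$ is bounded; and applying it to $\kappa$ requires $\kappa'\in L^2$ (have it) and $\kappa\in L^2$ near $-\infty$, for which I would use the bound $\kappa^2(1-\kappa^2)^2\cos^22\alpha/A\le V/(4m^2)$ together with $\alpha\to0$ (so $\cos^22\alpha$ is eventually bounded below) to extract $\kappa^2\in L^1$ near $-\infty$. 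Carefully ordering these deductions — $\alpha'\in L^2$, then $\alpha\to0$, then $\kappa\in L^2$, then $\kappa\to0$ staying small, then $w'\in L^1$, then $w\to0$ — is the delicate bookkeeping; none of the individual steps is hard, but the interdependence must be untangled to avoid circularity.
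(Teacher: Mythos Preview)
Your overall architecture---set $w=(1-\kappa^2)^2\alpha'$, show $w$ has a limit at $-\infty$, and then use $\liminf|w|=0$ to pin that limit to zero---matches the paper's. The gap is in how you establish that the limit exists. You assert $w'\in L^1$ near $-\infty$, but none of the routes you sketch actually delivers this: the energy gives $\sin^22\alpha\in L^1$, hence $\sin2\alpha\in L^2$, and together with $(\sin2\alpha)'\in L^2$ you correctly get $\sin2\alpha\to0$; but $|w'|\le C|\sin2\alpha|$ with $\sin2\alpha\in L^2$ on a half-line does \emph{not} yield $w'\in L^1$. The phrase ``iterating gives enough decay'' is where the argument stops being a proof. (It is true a posteriori that $\alpha$ decays exponentially, but extracting that from the second-order equation requires a separate linearization/comparison argument you have not supplied.)

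The paper sidesteps this entirely by invoking Lemma~\ref{lemma1}. Since $\alpha(-\infty)=0$ and $\kappa(-\infty)=0$ are \emph{assumed} in (\ref{bc}), one can choose $a$ so that (\ref{bd}) holds on $(-\infty,a)$; Lemma~\ref{lemma1} then says $\alpha$ does not vanish there, so $\sin4\alpha$ has a fixed sign, and since $(1-\kappa^2)^2(1-2\kappa^2/A)>0$ there as well, $w'$ has a fixed sign by (\ref{EL2}). Thus $w$ is monotone on $(-\infty,a)$, hence has a limit, and (\ref{lm}) forces that limit to be $0$. This is a two-line argument once Lemma~\ref{lemma1} is in hand, and it replaces your unproven $L^1$ claim with a sign/monotonicity observation. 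Note also that most of your effort---deriving $\alpha\to0$ and $\kappa\to0$ from the energy---is unnecessary here, since both limits are part of the hypothesis (\ref{bc}).
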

\begin{proof} 
Using Lemma \ref{lemma1} and (\ref{EL2}), we see that the quantity $((1-\kappa^2)^2\alpha')'$ does not change sign
for $x<a$. Thus $((1-\kappa^2)^2\alpha')(x)$ is monotone for $x<a$. Combining this observation with
the facts $\kappa(-\infty)=0$ and (\ref{lm}), we see that the lemma follows.
\end{proof}

\begin{lemma} For the function $\kappa$, we also have the full limit
\be 
\lim_{x\to-\infty}\kappa'(x)=0.
\ee
\end{lemma}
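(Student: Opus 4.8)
The goal is to show $\kappa'(x)\to 0$ as $x\to-\infty$, knowing already that $\alpha(x)\to 0$, $\alpha'(x)\to 0$, $\kappa(x)\to 0$, and that the energy $E(\kappa,\alpha)$ is finite. The strategy is to combine the finite-energy condition with the differential equation (\ref{EL1}) to promote a $\liminf$ statement to a full limit. First I would observe that finiteness of $E$, via the term $4A(\kappa')^2$ weighted by $r>0$, forces $\int_{-\infty}^0(\kappa')^2\,\dd x<\infty$, hence $\liminf_{x\to-\infty}|\kappa'(x)|=0$ — otherwise $|\kappa'|$ would be bounded below near $-\infty$ and the integral would diverge. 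So it remains to upgrade this to a genuine limit.

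**Main step.** I would control the second derivative $\kappa''$ near $-\infty$ using (\ref{EL1}). As $x\to-\infty$ we have $\kappa\to 0$, $\alpha\to 0$ (so $\sin^2 2\alpha\to 0$, $\cos^2 2\alpha\to 1$), $\alpha'\to 0$, and $\kappa$ is bounded; hence the right-hand side of (\ref{EL1}) tends to $0$, i.e. $\kappa''(x)\to 0$ as $x\to-\infty$. Now a standard Landau-type argument applies: if $\kappa''\to 0$ and there is a sequence $x_n\to-\infty$ with $\kappa'(x_n)\to 0$, then for any $\vep>0$ one can pick $N$ with $|\kappa''|<\vep$ on $(-\infty,-N]$ and $n$ large with $|\kappa'(x_n)|<\vep$ and $x_n<-N$; then for any $x<x_n$, the mean value theorem (or integrating $\kappa''$) gives $|\kappa'(x)|\le|\kappa'(x_n)|+|x-x_n|\cdot\sup|\kappa''|$, which is not immediately a bound. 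The cleaner route is: since $\int(\kappa')^2<\infty$ on $(-\infty,0)$ and $\kappa''$ is bounded near $-\infty$, the function $\kappa'$ is uniformly continuous there (bounded derivative), and a uniformly continuous $L^2$ function on a half-line must tend to $0$ at infinity. This is the key lemma and the cleanest way to finish.

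**Anticipated obstacle.** The main subtlety is making rigorous the claim that $\kappa''$ is bounded (or $\to 0$) near $-\infty$: this requires knowing $\kappa$ stays bounded there, which follows from $\kappa\to 0$, and that all the other ingredients $(\alpha,\alpha',\kappa)$ appearing on the right of (\ref{EL1}) are bounded near $-\infty$ — $\alpha\to 0$ and $\alpha'\to 0$ are exactly Lemmas 1–2 (or rather the boundary condition plus Lemma \ref{lemma2}), so these are available. One should also note that the hypothesis of the earlier lemmas — namely $|\alpha(x)|<\pi/8$ and $\kappa^2(x)<\max\{1,A/2\}$ for $x<a$ — holds for some $a$ because $\alpha(-\infty)=0$ and $\kappa(-\infty)=0$, so we are entitled to use Lemmas \ref{lemma1} and \ref{lemma2}. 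With $\alpha'\to 0$ established, the $(\alpha')^2$ term on the right of (\ref{EL1}) vanishes in the limit, and the remaining factors are explicitly bounded, giving $\kappa''\to 0$. Then the uniform-continuity-plus-$L^2$ argument closes the proof with no further difficulty; I expect the write-up to be short, perhaps five or six lines.
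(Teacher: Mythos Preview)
Your argument is correct, but it proceeds differently from the paper. You bound $\kappa''$ near $-\infty$ (indeed $\kappa''\to0$ since the prefactor $\kappa(1-\kappa^2)$ in (\ref{EL1}) tends to $0$ while the bracket stays bounded), deduce that $\kappa'$ is uniformly continuous on a left half-line, and then invoke the standard fact that a uniformly continuous function in $L^2((-\infty,0))$ must vanish at $-\infty$. The paper instead exploits the sign structure of (\ref{EL1}): using Lemma~\ref{lemma2} and the boundary values it finds $c$ so that for $x<c$ the equation reads $\kappa''=C_0(x)\kappa$ with $C_0(x)>0$; the maximum principle then forces $\kappa$ to keep a fixed sign (or be identically zero), whence $\kappa''$ has a fixed sign and $\kappa'$ is monotone, and monotonicity together with the finite-energy $\liminf$ gives the full limit. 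Your route is a clean soft-analysis argument that would work for a broader class of right-hand sides, while the paper's convexity/maximum-principle argument extracts a bit more structure (nonvanishing of $\kappa$ and monotonicity of $\kappa'$ near $-\infty$), though that extra information is not actually needed later since the sign property of $\kappa$ is re-derived from the BPS equations in the theorem.
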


\begin{proof} Using Lemma \ref{lemma2}, we see that there is a point $c\in(-\infty,\infty)$ such that
\be 
\left(\frac{m^2(1-3\kappa^2)\cos^2 2\alpha}A-(\alpha')^2-m^2\sin^22\alpha\right)(x)>0,\quad 1-\kappa^2(x)>0,\quad x<c.
\ee
This result allows us to get from (\ref{EL1}) the equation
\be \label{kdd}
\kappa''=C_0(x)\kappa,\quad C_0(x)>0,\quad x<c.
\ee
Hence $\kappa(x)\neq0$ for $x<c$ or $\kappa\equiv0$ otherwise it will conflicts with the maximum principle in view of the boundary condition
$\kappa(-\infty)=0$. Assuming $\kappa\not\equiv0$ and applying $\kappa(x)\neq0$ and (\ref{kdd}), we see that $\kappa'(x)$ is monotone.
In view of this result and the finite-energy condition, we see that the proof follows.
\end{proof}

To proceed further, we consider the quantities
\bea 
P&=&(1-\kappa^2)^2\alpha'-(1-\kappa^2)^2 m\sin2\alpha,\\
Q&=&\kappa'-\frac{m}A\kappa(1-\kappa^2)\cos2\alpha,
\eea
in terms of a solution pair $(\kappa,\alpha)$ of (\ref{EL1})--(\ref{EL2}) under the boundary condition (\ref{bc}) and
the finite-energy condition (\ref{Ef}). Lemmas 2--3 and (5) imply that
\be \label{PQB}
\lim_{x\to-\infty}P(x)=0,\quad \lim_{x\to-\infty}Q(x)=0.
\ee

We now use (\ref{PQB}) to establish the following fact.

\begin{lemma}\label{lemma4} Actually we have $P\equiv0$ and $Q\equiv0$.
\end{lemma}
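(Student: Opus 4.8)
The plan is to turn the two ``defect'' quantities $P$ and $Q$ into a closed first-order system, to show that this system forces $P\equiv Q\equiv 0$ on a left half-line by a comparison argument anchored at $x=-\infty$, and then to spread the vanishing to all of $\bfR$ by ODE uniqueness. The first move is to differentiate $P$ and $Q$, insert (\ref{EL1})--(\ref{EL2}), and re-express $\alpha'$ and $\kappa'$ through $P$ and $Q$ via $\alpha'=m\sin2\alpha+P/(1-\kappa^2)^2$ and $\kappa'=\frac mA\kappa(1-\kappa^2)\cos2\alpha+Q$ (valid wherever $1-\kappa^2\neq0$); the decisive observation is that every term free of $P$ and $Q$ cancels --- this cancellation is precisely the algebraic fact underlying the BPS rewriting of the energy in the introduction --- leaving
\bea
P'&=&-2m(\cos2\alpha)\,P+4m\kappa(1-\kappa^2)(\sin2\alpha)\,Q,\nn\\
Q'&=&-\frac{m(1-3\kappa^2)\cos2\alpha}{A}\,Q-\frac{\kappa}{A(1-\kappa^2)^3}\,P^2.\nn
\eea
Near $x=-\infty$ all coefficients are bounded, since $\kappa\to0$ and $\alpha\to0$ by (\ref{bc}), and the two diagonal coefficients tend to the negative numbers $-2m$ and $-m/A$.

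Next I would use (\ref{bc}) and (\ref{PQB}) to fix $x_0$ so far to the left that $|\alpha|,|\kappa|,|Q|<\vep$ on $(-\infty,x_0]$ for a small $\vep$; there $\cos2\alpha\geq\frac12$, $1-3\kappa^2\geq\frac12$, $\frac12\leq1-\kappa^2\leq1$, $|\sin2\alpha|\leq2\vep$. Setting $\Phi=P^2+Q^2$ and using the system above, the two manifestly negative diagonal terms contribute $-c_0\Phi$ with $c_0>0$, while the remaining terms (the $PQ$ cross term and the $P^2Q$ term) each carry an extra factor $O(\vep^2)$ near $-\infty$, so $\Phi'\leq-c_0\Phi+O(\vep^2)\Phi$ on $(-\infty,x_0]$; choosing $\vep$ small (i.e.\ $x_0$ far enough left) gives $\Phi'\leq-(c_0/2)\Phi$ there. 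Since $\Phi(x)\to0$ as $x\to-\infty$ by (\ref{PQB}), the quantity $\e^{c_0x/2}\Phi$ is nonincreasing, hence $0\leq\Phi(x)\leq\e^{c_0(x_2-x)/2}\Phi(x_2)\to0$ as $x_2\to-\infty$; therefore $\Phi\equiv0$, i.e.\ $P\equiv Q\equiv0$, on $(-\infty,x_0]$.

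To finish, set $\tau=\sup\{\,t:\ P\equiv Q\equiv0\ \mbox{on}\ (-\infty,t]\,\}\geq x_0$, so $P\equiv Q\equiv0$ on $(-\infty,\tau]$ by continuity. Suppose $\tau<\infty$. If $\kappa^2(\tau)\neq1$, then on a neighbourhood of $\tau$ the system above has a right-hand side that is polynomial in $(P,Q)$ with continuous coefficients, hence locally Lipschitz; since $(P,Q)$ agrees with the zero solution at $\tau$, uniqueness forces $(P,Q)\equiv(0,0)$ on a two-sided neighbourhood of $\tau$, contradicting maximality. If instead $\kappa^2(\tau)=1$, the identity $Q(\tau)=0$ reads $\kappa'(\tau)=\frac mA\kappa(\tau)(1-\kappa^2(\tau))\cos2\alpha(\tau)=0$, so $\kappa(\tau)=\pm1$ and $\kappa'(\tau)=0$; but $\kappa\equiv\pm1$ solves (\ref{EL1}) with exactly these data, and (\ref{EL1}), read as a second-order ODE for $\kappa$ with $\alpha$ regarded as known, has a locally Lipschitz right-hand side, so $\kappa\equiv\pm1$ on $\bfR$, contradicting $\kappa(-\infty)=0$ in (\ref{bc}). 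Hence $\tau=\infty$ and $P\equiv Q\equiv0$ on $\bfR$ (incidentally, $\kappa^2<1$ throughout).

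I expect the comparison step near $-\infty$ to be the main obstacle: one is effectively running a Gronwall-type estimate \emph{from} $x=-\infty$, where the favourable diagonal coefficients only approach, rather than equal, $-2m$ and $-m/A$, and where the coupling $-\kappa(1-\kappa^2)^{-3}P^2$ feeding the $Q$-equation is quadratic rather than linear. What rescues it is precisely that this quadratic term and the cross term each acquire a small factor near $-\infty$ (a power of $\kappa$, $\sin2\alpha$, or $Q$) and are absorbed by the two genuine restoring terms; carrying that absorption out rigorously, together with checking the (routine but essential) cancellation of the $P,Q$-independent terms in the first step, is where the real work lies.
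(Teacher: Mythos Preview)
Your proposal is correct and follows essentially the same route as the paper: derive the closed first-order system for $(P,Q)$, show that $\Phi=P^2+Q^2$ satisfies a differential inequality $\Phi'\leq -c\,\Phi$ on a left half-line, combine this with $\Phi(-\infty)=0$ from (\ref{PQB}) to force $\Phi\equiv0$ there, and then propagate by ODE uniqueness. The paper packages the estimate on $\Phi'$ as a field-dependent ``quadratic form'' whose matrix tends to a positive definite limit at $-\infty$ (absorbing the cubic $P^2Q$ term into a $P$-dependent off-diagonal entry), and phrases the Gronwall step as a blow-up contradiction rather than a direct squeeze, but these are purely cosmetic differences; your continuation argument via $\tau=\sup\{\cdots\}$, including the separate treatment of a hypothetical point with $\kappa^2=1$, is in fact more careful than the paper's one-line appeal to uniqueness.
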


\begin{proof} In view of (\ref{EL1})--(\ref{EL2}), we obtain the following differential equations fulfilled by
the pair $(P,Q)$:
\bea 
P'&=&-2m \cos2\alpha P +4m\kappa(1-\kappa^2)\sin2\alpha Q,\label{P}\\
Q'&=&-\frac1{A(1-\kappa^2)^3}(\kappa P^2+m(1-\kappa^2)^3(1-3\kappa^2)\cos2\alpha Q).\label{Q}
\eea
Thus, we have
\bea \label{24}
(P^2+Q^2)'&=&-4m\cos2\alpha \,P^2+2\left(4m\kappa(1-\kappa^2)\sin2\alpha-\frac{\kappa P}{A(1-\kappa^2)^3}\right)PQ\nn\\
&&-\frac{2m}A(1-3\kappa^2)\cos2\alpha\, Q^2\nn\\
&\equiv& -\Omega(P,Q),
\eea 
where $\Omega$ may be identified with a `quadratic form' which is represented by the field-dependent matrix
\be 
M(x)=\left(\begin{array}{cc}4m\cos2\alpha&-4m\kappa(1-\kappa^2)\sin2\alpha+\frac{\kappa P}{A(1-\kappa^2)^3}\\
-4m\kappa(1-\kappa^2)\sin2\alpha+\frac{\kappa P}{A(1-\kappa^2)^3}&\frac{2m}A(1-3\kappa^2)\cos2\alpha\end{array}\right),
\ee
so that
\be 
\Omega(P,Q)=\left(\begin{array}{cc}P\\Q\end{array}\right)^\tau M(x) \left(\begin{array}{cc}P\\Q\end{array}\right).
\ee

In view of (\ref{bc}) and (\ref{PQB}), we have
\be 
\lim_{x\to-\infty}M(x)=\left(\begin{array}{cc}4m&0\\0&\frac{2m}A\end{array}\right).
\ee
Therefore, we can find some $x_0\in (-\infty,\infty)$ and constants $0<\lm_1<\lm_2<\infty$ such that
\be \label{27}
\lm_1 (P^2+Q^2)\leq \Om(P,Q)\leq\lm_2(P^2+Q^2),\quad x\leq x_0.
\ee
Inserting (\ref{27}) into (\ref{24}), we arrive at the inequality
\be \label{28}
-\lm_2(P^2+Q^2)\leq (P^2+Q^2)'\leq -\lm_1(P^2+Q^2),\quad x\leq x_0.
\ee

If $(P^2+Q^2)(x_0)>0$, we can integrate (\ref{28}) to obtain
\be
(P^2+Q^2)(x_0)\e^{\lm_1(x_0-x)}\leq (P^2+Q^2)(x)\leq (P^2+Q^2)(x_0)\e^{\lm_2(x_0-x)},\quad x<x_0.
\ee
Letting $x\to-\infty$, we have $(P^2+Q^2)(x)\to\infty$, contradicting Lemmas 2--3 and (\ref{bc}), which indicate that
$P(-\infty)=Q(-\infty)=0$.

If $P(x_0)=Q(x_0)=0$, we may use this condition in the coupled system of the first-order equations
(\ref{P}) and (\ref{Q}) and the uniqueness theorem for the initial value problems of ordinary differential equations
to infer that $P\equiv0$ and $Q\equiv0$ so that the proof of the lemma follows.
\end{proof}

We can now establish

\begin{theorem} In the context of finite-energy solutions satisfying the boundary condition
(\ref{bc}), the Euler--Lagrange equations (\ref{EL1})--(\ref{EL2}) of
the kink soliton energy (\ref{1}) and the BPS equations (\ref{BPS1})--(\ref{BPS2})
are equivalent. Thus, $(\kappa,\alpha)$ is a solution with a nontrivial $\kappa$-component if and only if $\kappa(x)\neq0$,
that is either $\kappa(x)>0$ or $\kappa(x)<0$, for all $x\in(-\infty,\infty)$.
\end{theorem}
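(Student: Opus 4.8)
The plan is to convert the two identities $P\equiv 0$ and $Q\equiv 0$ supplied by Lemma \ref{lemma4} into the BPS equations (\ref{BPS1})--(\ref{BPS2}), and then to read off the sign alternative for $\kappa$ from the resulting first-order system. Since the implication (\ref{BPS1})--(\ref{BPS2}) $\Rightarrow$ (\ref{EL1})--(\ref{EL2}) has already been recorded, establishing the reverse implication on the class of finite-energy solutions obeying (\ref{bc}) is all that is needed to close the equivalence.

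First I would dispose of a small but essential point, namely that $1-\kappa^2$ never vanishes, so that $P\equiv 0$ genuinely yields (\ref{BPS2}) rather than the vacuous identity $0=0$. Observe that $Q\equiv 0$ says exactly that $\kappa$ solves the first-order equation (\ref{BPS1}). The constant functions $\kappa\equiv 1$ and $\kappa\equiv -1$ also solve (\ref{BPS1}); hence, were $\kappa^2(x_1)=1$ at some finite $x_1$, the uniqueness theorem for the initial value problem of (\ref{BPS1}) would force $\kappa\equiv 1$ or $\kappa\equiv -1$, contradicting $\kappa(-\infty)=0$. Since $\kappa$ is continuous with $\kappa(-\infty)=0$, it follows by the intermediate value property that $\kappa^2(x)<1$ for every $x$. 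Consequently $(1-\kappa^2)^2>0$ throughout, and dividing $P=(1-\kappa^2)^2(\alpha'-m\sin 2\alpha)\equiv 0$ by $(1-\kappa^2)^2$ gives $\alpha'=m\sin 2\alpha$, which is (\ref{BPS2}); while $Q\equiv 0$ is precisely (\ref{BPS1}). This shows every finite-energy solution of (\ref{EL1})--(\ref{EL2}) under (\ref{bc}) is BPS, so the two systems are equivalent in this class.

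For the concluding dichotomy I would again exploit that any such solution satisfies the first-order equation (\ref{BPS1}), together with the fact that $\kappa\equiv 0$ also solves it. Thus if $\kappa(x_0)=0$ at any single point, uniqueness for the initial value problem forces $\kappa\equiv 0$; equivalently, a solution with nontrivial $\kappa$-component satisfies $\kappa(x)\neq 0$ for all $x$, and the converse is trivial. Since $\kappa$ is continuous on the connected line $(-\infty,\infty)$, a nonvanishing $\kappa$ keeps a fixed sign, i.e. $\kappa>0$ everywhere or $\kappa<0$ everywhere, as asserted. The only substantive difficulty in the whole chain has already been surmounted in the preceding lemmas — it lies in upgrading the $\liminf$ information (\ref{lm}) into genuine limits for $\alpha'$ and $\kappa'$ and in the differential-inequality analysis of $P^2+Q^2$ in Lemma \ref{lemma4}; what is left for the theorem itself is the short ODE bookkeeping outlined above.
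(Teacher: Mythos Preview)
Your proposal is correct and follows essentially the same route as the paper: use $Q\equiv0$ to get (\ref{BPS1}), invoke uniqueness for (\ref{BPS1}) to rule out $\kappa^2=1$ anywhere (lest $\kappa\equiv\pm1$, contradicting $\kappa(-\infty)=0$), then divide $P\equiv0$ by $(1-\kappa^2)^2$ to get (\ref{BPS2}); finally, apply uniqueness again at $\kappa=0$ for the sign dichotomy. Your added remarks (the intermediate value step giving $\kappa^2<1$ rather than merely $\kappa^2\neq1$, and the explicit appeal to connectedness for the fixed sign) are minor elaborations, not a different strategy.
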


\begin{proof} Let $(\kappa,\alpha)$ be a solution pair. Then Lemma \ref{lemma4} gives us $Q\equiv0$. So (\ref{BPS1}) is
fulfilled. Hence there is no point $x_0$ such that $\kappa^2(x_0)=1$ otherwise the uniqueness theorem will imply that
$\kappa^2(x)=1$ for all $x$ which is inconsistent with the boundary condition $\kappa(-\infty)=0$. Thus $1-\kappa^2(x)
\neq0$ for any $x$. Inserting this result into the conclusion $P\equiv0$ arrived at in Lemma \ref{lemma4}, we see
that (\ref{BPS2}) is also fulfilled.

If $\kappa\not\equiv0$, then $\kappa(x)\neq0$ for any $x\in(-\infty,\infty)$ since by virtue of the equation (\ref{BPS1})
and the uniqueness theorem we deduce $\kappa\equiv0$ if there is a point $x_0$ such that $\kappa(x_0)=0$.
\end{proof}

The above theorem allows us to focus on the BPS equations (\ref{BPS1})--(\ref{BPS2}) which are upper triangular and can
be integrated readily.

In fact, integrating (\ref{BPS2}), we have
\be \label{30}
\alpha(x)=\arctan\left(c\e^{2mx}\right),\quad c>0.
\ee
Substituting (\ref{30}) into (\ref{BPS1}) with
\be 
\cos\alpha=\frac1{\sqrt{1+c^2 \e^{4mx}}}, 
\ee
and assuming $\kappa>0$, we obtain a separable equation which can be integrated to give us
\be 
\ln\frac{\kappa^2}{1-\kappa^2}=\frac{2m}A\int\frac{1-c^2 \e^{4mx}}{1+c^2\e^{4mx}}\,\dd x=\frac1A(2mx-\ln(1+c^2\e^{4mx}))+C,
\ee 
where $C$ is an integrating constant. 

It will be convenient to absorb the constant $c>0$ with an initial reference point, $x_0$, so that $c=\e^{-2m x_0}$. 
Thus, with
$\kappa>0$, we may summarize our solution into the formulas
\bea 
\alpha(x)&=&\arctan\left(\e^{2m(x-x_0)}\right),\\
\kappa(x)&=&\left(\frac{ q\sigma(x-x_0)}{1+ q\sigma(x-x_0)}\right)^{\frac12},\quad \sigma(x)=\frac{\e^{\frac{2m}A x}}{(1+\e^{4mx})^{\frac1A}},
\eea
where $q>0$ is another free parameter. Hence the explicit solution depends on two free parameters, $x_0$ and $q$.

From the structure of the equations (\ref{BPS1})--(\ref{BPS2}), we see that if $(\kappa,\alpha)$ is a solution, so is
$(-\kappa,\alpha)$. Thus we have obtained all possible finite-energy solutions of (\ref{EL1})--(\ref{EL2}) subject to the boundary 
condition (\ref{bc}).

\medskip 

{\small The authors would like to thank the referee for helpful suggestions.
The research of Chen was supported in part by
Henan Basic Science and Frontier Technology Program Funds under grant 112300410054.}

\small{

}

\begin{thebibliography}{99}

\bibitem{SW}
N. Seiberg and E. Witten,
{\em Nucl. Phys.} B {\bf426} (1994) 19. Erratum -- {\em ibid.} B {\bf430} (1994) 485.

\bibitem{MY}
A. Marshakov and A. Yung,
{\em Nucl. Phys.} B {\bf647} (2002) 3.

\bibitem{Auzzi}
R. Auzzi, S. Bolognesi, J. Evslin, K. Konishi,  and A. Yung,
{\em Nucl. Phys.} B {\bf673} (2003) 187.

\bibitem{HT}
A. Hanany and D. Tong,
{\em J. High Energy Phys.} {\bf0307} (2003) 037.

\bibitem{ShY}
M. Shifman, and A. Yung,
{\em Rev. Mod. Phys.} {\bf79} (2007) 1139.

\bibitem{Tong}
D. Tong, 
{\em Ann. Phys.} {\bf324} (2009) 30.

\bibitem{Gr}
J. Greensite,
{\em An Introduction to the Confinement Problem},
Lecture Notes in Physics
{\bf821}, Springer-Verlag, Berlin and New York, 2011.


\bibitem{Man1}
S. Mandelstam, 
 {\em Phys. Lett.} B
{\bf53} (1975) 476.

\bibitem{Man2}
S. Mandelstam, 
{\em Phys. Rep.} C {\bf67} (1980) 109.

\bibitem{Nambu}
Y. Nambu,  {\em Phys. Rev.} D {\bf10} (1974) 4262.


\bibitem{tH1}
G. 't Hooft,  {\em Nucl. Phys.} B {\bf138} (1978) 1.


\bibitem{tH2}
G. 't Hooft, 
{\em Nucl. Phys.} B {\bf190} (1981) 455.



\bibitem{ABS}
R. Auzzi, S. Bolognesi, and M. Shifman, {\em Phys. Rev.} D  {\bf81} (2010) 085011.

\bibitem{Bo}
E. B. Bogomol'nyi,  {\em Sov. J. Nucl. Phys.} {\bf24} (1976)
449.

\bibitem{PS}
M. K. Prasad and C. M. Sommerfield, {\em
Phys. Rev. Lett.} {\bf35} (1975) 760.



\end{thebibliography}
\end{document}